\newtheorem{definition}{Definition}
\newtheorem{lemma}{Lemma}
\newtheorem{theorem}{Theorem}
\begin{document}

\title{Speed   scaling  with  power   down  scheduling   \\
for  agreeable deadlines\thanks{This work  is supported  by the ANR  grants NETOC
    and TODO as well as by the GdR recherche op\'erationnelle.}}
\date{}

\author{Evripidis Bampis\thanks{LIP6, Universit\'e Pierre et Marie Curie, Paris, France}
\and
Christoph D\"urr\footnotemark[2]~\thanks{CNRS}
\and
Fadi Kacem\thanks{IBISC, Universit\'e d'Evry, France}
\and
Ioannis Milis\thanks{Department of Informatics, Athens University of Economics and Business, Greece}
}

\maketitle

\begin{abstract}
  We consider the problem of  scheduling on a single processor a given
  set of  $n$ jobs. Each  job $j$ has  a workload $w_j$ and  a release
  time $r_j$. The processor can vary its speed and hibernate to reduce
  energy  consumption.   In  a  schedule minimizing  overall  consumed
  energy, it  might be  that some jobs  complete arbitrarily  far from
  their  release  time.  So  in  order  to  guarantee some  quality  of
  service, we  would like to  impose a deadline $d_j=r_j+F$  for every
  job  $j$, where  $F$  is a  guarantee  on the  \emph{flow time}.   We
  provide  an  $O(n^3)$  algorithm   for  the  more  general  case  of
  \emph{agreeable  deadlines},  where  jobs  have  release  times  and
  deadlines  and  can be  ordered  such  that  for every  $i<j$,  both
  $r_i\leq r_j$ and $d_i\leq d_j$.
\end{abstract}


\section{Introduction}

Recent research addresses  the issue of reducing the  amount of energy
consumed by  computer systems while maintaining  satisfactory level of
performance.   This can  be done  at  different levels  of a  computer
system.  One possibility is to  specify a good scheduling mechanism in
the operating system level. Here  we have two mechanisms at hand.  One
common method  for saving  energy is the  \emph{power-down mechanism},
which is to  simply suspend the system during  long enough idle times.
Another common method is \emph{speed  scaling}, which is to adjust the
processor  speed low  enough to  meet the  jobs requirements.  In this
paper  we study  the problem  of designing  scheduling  algorithms for
minimizing the consumed energy using both mechanisms.

The question whether this problem can be solved in polynomial time was
posed by Irani and Pruhs~\cite{IraniPruhs:Algorithmic-problems}, who
called it \emph{speed  scaling with  power  down scheduling}
problem.   We provide  an  $O(n^3)$  algorithm in  this  paper for  the
special case  of agreeable deadlines.  Jobs may be   released at different
time moments, and may have  distinct deadlines. The agreeable deadline
property  just  means  that   later  released  jobs  also  have  later
deadlines.  This  holds, for example, when  the deadline of  each job is
exactly $F$ units after  its release time,  which arises when one
wants to maintain a guarantee of service for the flow time of the jobs.

\section{Problem definition}

An  instance   of  our  scheduling  problem  consists   of  $n$  jobs,
$1,2,\ldots, n$, where each job $j,~ 1 \leq j \leq n$, is specified by
a release time/deadline interval\footnote{Notation: $[t_0,t_1)$ stands
  for the half  open interval $\{t:t_0\leq t <  t_1 \}$.} $[r_j, d_j)$
in which it  must be scheduled and a workload  $w_j$.  An instance has
the \emph{agreeable deadlines property}  if the jobs can be renumbered
such that both their release times and deadlines are in non-decreasing
order, i.e.  $i <  j$ implies $r_i  \leqslant r_j$ and  $d_i \leqslant
d_j$.

A schedule is defined by three functions 
\begin{eqnarray*}
  \text{mode}  & : & \mathbbm{R} \rightarrow \{\text{on},
  \text{off}\}\\
  \text{speed} & : & \mathbbm{R} \rightarrow \mathbbm{R^+}\\
  \text{job} & : & \mathbbm{R} \rightarrow \{\text{none}, 1, \ldots, n\},
\end{eqnarray*}

with the following properties

\begin{enumerate}
  \item $\forall t : \text{speed} (t) > 0 \Rightarrow \text{mode} (t) =
  \text{on}$

  \item $\forall t : \text{speed} (t) = 0 \Leftrightarrow \text{job} (t) =
  \text{none}$

  \item $\forall t : \text{job} (t) = j, j \neq \text{none} \Rightarrow t \in
  [r_j, d_j)$

  \item $\forall j\neq \text{none} : \int \text{speed} (t) \text{d} t = w_j$ where the integral
  is over all times $t$ such that $\text{job} (t) = j$

  \item for every time $t$, there is a positive length interval $I \ni t$ on
  which the schedule is constant. Moreover $I$ is of the form $(- \infty, u),
  [t', u)$ or $[t', + \infty)$ for some time points $t', u$.
  \label{piecewiseconstant}
\end{enumerate}

The last property is in fact a simplifying assumption to avoid degenerate schedules.
The    interpretation    is    that    at    a    time    $t$    where
$\text{job}(t)=\text{none}$, the machine is idle but switched on when
$\text{mode}(t)=\text{on}$      and     is     shut      down     when
$\text{mode}(t)=\text{off}$.   There   is   a  non-negligible   energy
consumption during the idle time periods, but one avoids the cost of
shutting down and rebooting the machine.

The cost (i.e., the consumed energy) of a schedule is specified by three parameters:
an exponent $\alpha \in [2, 3]$, a wake-up cost $L > 0$ and a ground
dissipation energy $g > 0$, and it has two components:

\begin{enumerate}
\item The \emph{speed cost}, that is the energy consumed in all times $t$ such that $\text{job} (t) = j \neq \text{none}$.
This cost is defined as $c_{\text{speed}} = \int_{} \text{speed} (t)^{\alpha} \text{d} t$.

\item The \emph{mode cost}, that is the cost of the ground dissipation
  energy plus the wake-up energy.
\end{enumerate}
A schedule with property (\ref{piecewiseconstant}) partitions the time into a sequence $S$
of disjoint, inclusion-wise maximal intervals, such that $\text{mode} (t) =
\text{on}$ if and only if $t \in \cup S=\cup_{I_k \in S} I$.
The sequence $S$ is called the \emph{support} of the schedule, and the
energy consumption generated by this support constitutes its  mode cost which is defined
as $c_{\text{mode}}= L \cdot (|S| + 1) + g \cdot | \cup S|$. Note that  we count a wake-up cost
$L$  for the two  half-infinite intervals  surrounding $S$.

Hence, the total cost is just the sum $c_\text{speed} + c_\text{mode}$
and the  problem studied in this  paper consists in  finding a minimum
cost schedule for an agreeable deadline instance.

The outline  of the  paper is  as follows. After  a brief  survey over
related work,  we start  showing  structural  properties of  optimal
schedules, in particular we introduce  the notion of prefix and suffix
of  a  block.   Finally  we  define the  dynamic  program,  prove  its
optimality and analyse its complexity, while mentioning implementation
issues. The  algorithm has been  implemented in Python and  can handle
instance of 300 jobs within a second.

\section{Previous work}

Our  scheduling  problem  for  general instances  (with  non-agreeable
deadlines)  was raised  in~\cite{IraniPruhs:Algorithmic-problems}.  No
polynomial time algorithm is known for this problem, nor has it 
been  shown to  be NP-hard.   The current  best positive  result  is a
$137/117$-approximation provided  in \cite{Abers.Antoniadis:Race}. The
same paper  also gives an  NP-hardness proof, however for  a different
energy   mode,   when   $c_{\text{speed}}$   is   defined   as   $\int
f(\text{speed} (t))  \text{d} t$ for some  piecewise linear monotone
function $f$. 

The general problem contains  two subproblems, which have been studied
and  solved  individually.  The  first  one  does  not consider  speed
scaling, and restricts to speeds 0  or 1, depending on the mode.  Here
essentially  the  goal  is  to  minimize  $c_\text{mode}$  only.  This
subproblem  has been solved  in $O(n^5)$  time by  dynamic programming
\cite{baptiste2007polynomial}. For  agreeable instances the complexity
has been  improved to $O(n^2)$ \cite{chau2011}.  The second subproblem
does  not consider  the power  down  mechanism, and  restricts to  the
single mode  'on' and  to ground dissipation  energy $g=0$.   Here the
problem is  to minimize $c_\text{speed}$ only.  This  problem has been
solved by a widely celebrated  greedy algorithm due to Yao, Demers and
Shenker \cite{yao1995scheduling} in  $O(n^3)$ time.  The complexity of
this algorithm, known  as YDS, has been improved  to $O(n^2\log n)$ in
\cite{li2006discrete}  and even  to $O(n^2)$  for  agreeable instances
\cite{wu2010min}.

Different  variants of  this problem  have  been studied  in the  past
years, which include the online setting, as well as different objective
values,  like   minimizing  throughput  or  flowtime.    We  refer  to
\cite{Albers:energy-efficient-algorithms} for an overview.

An important ingredient  for the algorithm presented in  this work, is
the aforementioned YDS algorithm.   For completeness we roughly sketch
it now.   At each step of  the algorithm, the  interval $I^{\star}$ of
maximum density is  selected among all $O(n^2)$ intervals  of the form
$[r_i,d_j]$. The \emph{density of an interval} is defined as the ratio
$s=W/(d_j-r_i)$, where $W$ denotes the  total workload of all jobs $k$
with  $[r_k,d_k]\subseteq  [r_i,d_j]$.   The  key  idea  is  that  any
feasible schedule must have for  $[r_i,d_j]$ an average speed at least
$s$.  Then all those  jobs are  scheduled in  $I^\star$ at  speed $s$
using the EDF (Earliest-Deadline-First)  policy.  No job will miss its
deadline  by maximal  density of  $I^\star$.   For the  sequel of  the
algorithm the  time interval $I^{\star}$ is  \emph{blacked out}.  This
means  that  when  computing  densities  of  candidate  intervals  for
subsequent iterations, the blacked out intervals are excluded, and the
schedule  for the  remaining  jobs  must exclude  them  as well.   The
algorithm ends when all jobs are scheduled.

\section{Structure of an optimal schedule}

When a job $j$ is running at speed $s$ its execution takes $w_j / s$ time units and
the consumed energy is $(s^{\alpha} + g) w_j / s$. This amount of energy is
minimum for speed $s^{\star} := (g / (\alpha - 1))^{1 / \alpha}$, which
we  call  the  \emph{critical  speed}.  Note  that  $s^\star$  is  job
independent. The \emph{density} of an interval $I$
is defined as $\sum w_j / |I|$ over all jobs $j$ with $[r_j, d_j) \subseteq I$. An interval is
called \emph{dense}, if its density is at least $s^{\star}$, and \emph{sparse}, otherwise.

\begin{lemma} \label{lem:structOpt}
  {\cite{IraniShuklaGupta}}  Given an instance of  the  speed scaling
  with  power down scheduling  problem, there  is an  optimal schedule
  $(\text{mode},\text{speed},\text{job})$ with the following properties.

  \begin{description}
    \item[job span] for every time $t$, if $\text{job} (t) = j \neq
    \text{none}$ then for all times $u \in [r_j, d_j)$ with $\text{mode} (u) =
    \text{on}$, we have $\text{speed} (u) \geqslant \text{speed} (t)$

    \item[earliest deadline first] for every time pair $t < u$ if $\text{job}
    (t) \neq \text{none} \neq \text{job} (u)$, then $\text{job} (t) \leqslant
    \text{job} (u)$.

    \item[dense intervals] dense intervals $I$ are scheduled according to the
    YDS rule.

  \item[domination] for any other optimal schedule
    $(\text{mode'},\text{speed'},\text{job'})$,  and  a smallest  time
    $t$  such   that  $\text{mode}(t)\neq  \text{mode'}(t)$   we  have
    $\text{mode}(t)=\text{on}$ and $\text{mode'}(t)=\text{off}$.
  \end{description}
\end{lemma}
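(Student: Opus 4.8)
This is a structural lemma with four separate properties. The strategy for each is a standard exchange argument: start from an arbitrary optimal schedule and show that if a property fails, we can modify the schedule without increasing cost (and without violating feasibility, i.e. release times, deadlines, and workloads), eventually reaching a schedule satisfying all four. I need to be careful that the modifications for the four properties don't conflict — so I would argue them in a good order and make each modification "refine" the previous structure.

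Let me think about each property.

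**Job span.** This is essentially the Yao–Demers–Shenker convexity/optimality principle adapted to the power-down setting. Suppose job $j$ runs at speed $s = \text{speed}(t)$ at some time $t$, but at another time $u \in [r_j, d_j)$ with $\text{mode}(u) = \text{on}$ we have $\text{speed}(u) = s' < s$. I would move an infinitesimal amount of $j$'s workload from a high-speed moment to a lower-speed one within $[r_j, d_j)$. By convexity of $x \mapsto x^\alpha$, averaging speeds strictly decreases $c_{\text{speed}}$ whenever two execution speeds differ; and since the interval where we dump the extra work is already "on," there is no change to $c_{\text{mode}}$, and feasibility is preserved because we stay inside $[r_j, d_j)$ and keep total work $w_j$. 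Hence in an optimal schedule, every execution speed of $j$ must be at most every on-speed available to $j$ — more precisely, $j$'s running speed is constant on its execution set and no larger than the speed at any on-time in its window. Some care is needed around the fact that "on"-times in $[r_j,d_j)$ might be running other jobs — but the job-span statement only compares $\text{speed}(u)$ to $\text{speed}(t)$, so this is fine.

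**Earliest deadline first.** Given job span, suppose $t < u$, both running jobs, but $\text{job}(t) = j > k = \text{job}(u)$. Because of agreeable deadlines, $r_k \le r_j$ and $d_k \le d_j$. Since $j$ runs at $t$ we have $t \ge r_j \ge r_k$, and since $k$ runs at $u$ we have $u < d_k \le d_j$; hence both $t$ and $u$ lie in $[r_k,d_k) \cap [r_j,d_j)$. So I can swap a small piece of $j$ executed near $t$ with a small piece of $k$ executed near $u$, both within the legal windows of both jobs; this keeps the speed function (and hence both cost components) literally unchanged, only relabeling which job is where. Iterating/sorting this way makes the job function monotone on its support. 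The obstacle here is to phrase the swap so it terminates and doesn't undo job span — but since the swap only permutes job labels over fixed speed values, job span is preserved.

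**Dense intervals.** I would use the blacking-out characterization of YDS. Claim: in an optimal schedule the machine is "on" throughout any dense interval $I$, and runs exactly at the YDS profile there. The mode part: if some sub-interval of a dense interval were "off," the jobs fully contained in $I$ would have to be squeezed into less than $|I|$ total on-time, forcing average speed above $s^\star$; but one checks that opening that idle piece and spreading work lowers $c_{\text{speed}}$ by more than the $g\cdot(\text{length})$ it costs in ground dissipation, because above the critical speed the marginal speed-energy exceeds $g$ — this is exactly the defining property of $s^\star$. Once the machine is on throughout $I$ and ground-dissipation there is a sunk cost, minimizing $c_{\text{speed}}$ on $I$ subject to the feasibility of the jobs contained in $I$ is precisely the YDS problem, whose unique optimal solution is the YDS schedule (combine with job span + EDF, which are exactly YDS's structural signature). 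I would invoke \cite{IraniShuklaGupta} / \cite{yao1995scheduling} for the uniqueness of the YDS profile.

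**Domination.** This is a "tie-breaking" / lexicographic-minimality argument and I would get it almost for free: among all optimal schedules satisfying the first three properties, pick one whose support $\cup S$ is lexicographically smallest, i.e. whose indicator function $\mathbbm{1}[\text{mode}(t)=\text{on}]$ is smallest in the sense that at the first point of disagreement with any competitor, ours is "off." I claim such a schedule exists: the first three properties are preserved under the natural closure operations, and one can show that given two optimal schedules one can form a third that is on only where *both* are on (or appropriately combine them) without raising cost — the $c_{\text{mode}}$ term is monotone in $\cup S$ up to the $L\cdot(|S|+1)$ fragmentation penalty, so a careful argument is needed that merging doesn't create extra fragments. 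Then "domination" is just the statement that at the first time the two schedules differ, the lex-minimal one is off and the other is on.

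**Main obstacle.** The hardest part is the dense-intervals property, specifically proving the machine must be *on* throughout a dense interval and then pinning the speed profile down to *exactly* YDS: this requires combining the convexity/marginal-cost argument (to rule out idling inside a dense interval) with an appeal to the uniqueness of the YDS schedule, and making sure the blacked-out/recursive structure of YDS is compatible with the power-down mode decisions elsewhere. The domination property is second-hardest because of the subtlety in how the $L\cdot(|S|+1)$ term behaves under the schedule-combination operation; everything else is routine exchange arguments. I would present job span and EDF first with explicit small-perturbation computations, then dense intervals, then domination last as a lexicographic refinement.
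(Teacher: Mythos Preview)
The paper does not give its own proof of this lemma: it is stated with a citation to \cite{IraniShuklaGupta} and no proof environment follows. So there is nothing in the paper to compare your argument against; the authors import the result (possibly augmenting it with the \emph{domination} clause as a convenient tie-breaker, which they later exploit in the proof of Lemma~\ref{lem:suffixPrefix}).

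That said, your treatment of the \emph{domination} property has the direction reversed. The lemma asserts that at the first time $t$ where the distinguished schedule and any other optimal schedule disagree, the distinguished one is \emph{on} and the other is \emph{off}; in other words, the distinguished schedule pushes its on-time (and hence its work) as early as possible. You propose instead to select the schedule whose on-indicator is lexicographically \emph{smallest}, so that at the first disagreement yours is \emph{off}. That would establish exactly the opposite of what is claimed, and it is also inconsistent with how the paper uses domination later (e.g.\ in Lemma~\ref{lem:suffixPrefix}, where ties are broken by shifting execution intervals leftward to obtain ``more work dominating towards the beginning''). The fix is to take the lexicographically \emph{largest} on-indicator, or equivalently to argue that whenever two optimal schedules first differ one can shift the later on-block earlier without increasing cost; your ``intersection of supports'' construction is the wrong combination operation here and, as you yourself note, interacts badly with the $L\cdot(|S|+1)$ term.

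Your sketches for \emph{job span}, \emph{EDF}, and \emph{dense intervals} are the standard exchange arguments and are fine at this level of detail; for dense intervals you should make explicit that switching an off-subinterval to on is done by \emph{extending} an adjacent on-block (so $|S|$ does not increase and no extra $L$ is incurred), which is possible because the jobs contained in a dense $I$ must already occupy some on-time inside $I$.
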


In  particular  the  first   property implies  that  whenever  $j$  is
scheduled, the speed is the same. The next two properties imply
that dense intervals divide the problem into independent subproblems, as we
describe now.

\begin{definition}
  \label{def:subsinstance} A  subinstance of our  problem is specified
  by a pair $(i,j)$ with $i \in \{1, \ldots, n\}$, $j \in \{i-1,
  \ldots, n\}$. For convenience we denote  $d_0 = r_1 - L / g$ and $r_{n
    + 1} = d_n  + L / g$. It consists of the  interval $I= [d_{i - 1},
  r_{j + 1})$ and a job set $J$. If $i = j + 1$, then $J = \emptyset$,
  else $J  =\{i, \ldots, j\}$. The release  time/deadline intervals of
  these jobs are restricted by intersection to $I$.
\end{definition}

\begin{figure}[ht]
\def\svgwidth{14cm}

\begingroup
  \makeatletter
  \providecommand\color[2][]{%
    \errmessage{(Inkscape) Color is used for the text in Inkscape, but the package 'color.sty' is not loaded}
    \renewcommand\color[2][]{}%
  }
  \providecommand\transparent[1]{%
    \errmessage{(Inkscape) Transparency is used (non-zero) for the text in Inkscape, but the package 'transparent.sty' is not loaded}
    \renewcommand\transparent[1]{}%
  }
  \providecommand\rotatebox[2]{#2}
  \ifx\svgwidth\undefined
    \setlength{\unitlength}{994.4pt}
  \else
    \setlength{\unitlength}{\svgwidth}
  \fi
  \global\let\svgwidth\undefined
  \makeatother
  \begin{picture}(1,0.42777051)%
    \put(0,0){\includegraphics[width=\unitlength]{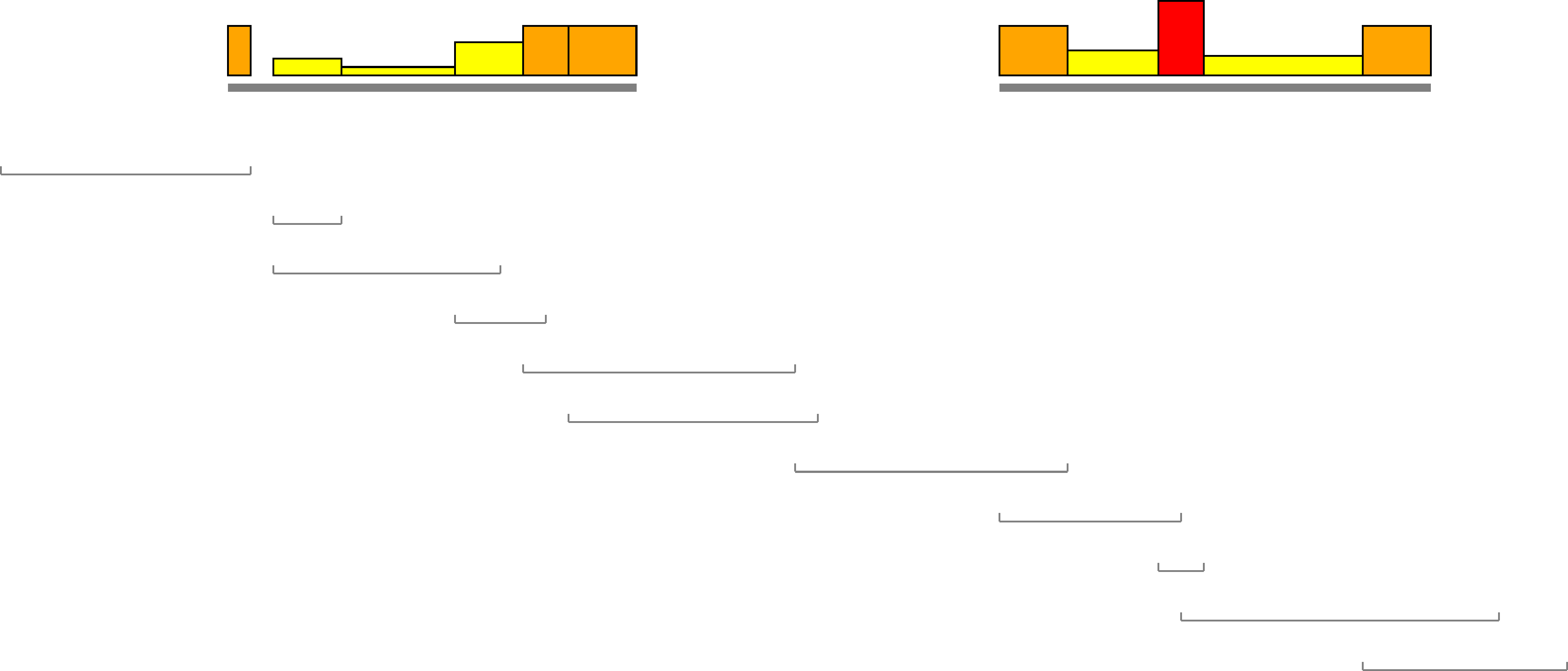}}%
    \put(0.00300667,0.32185653){\makebox(0,0)[lb]{\smash{w1=1}}}%
    \put(0.17671307,0.29025793){\makebox(0,0)[lb]{\smash{w2=1}}}%
    \put(0.17671307,0.25865934){\makebox(0,0)[lb]{\smash{w3=1}}}%
    \put(0.29251734,0.22706075){\makebox(0,0)[lb]{\smash{w4=2}}}%
    \put(0.33594395,0.19546216){\makebox(0,0)[lb]{\smash{w5=2}}}%
    \put(0.36489503,0.16386357){\makebox(0,0)[lb]{\smash{w6=3}}}%
    \put(0.50965033,0.13226497){\makebox(0,0)[lb]{\smash{w7=3}}}%
    \put(0.63993017,0.10066638){\makebox(0,0)[lb]{\smash{w8=2}}}%
    \put(0.74125892,0.06906778){\makebox(0,0)[lb]{\smash{w9=3}}}%
    \put(0.75573444,0.03746919){\makebox(0,0)[lb]{\smash{w10=3}}}%
    \put(0.87153872,0.00587061){\makebox(0,0)[lb]{\smash{w11=3}}}%
    \put(0.14776202,0.38505371){\makebox(0,0)[lb]{\smash{1}}}%
    \put(0.17671307,0.39558657){\makebox(0,0)[lb]{\smash{2}}}%
    \put(0.22013969,0.39032014){\makebox(0,0)[lb]{\smash{3}}}%
    \put(0.29251734,0.40611944){\makebox(0,0)[lb]{\smash{4}}}%
    \put(0.33594395,0.38505371){\makebox(0,0)[lb]{\smash{5}}}%
    \put(0.36489503,0.38505371){\makebox(0,0)[lb]{\smash{6}}}%
    \put(0.63993017,0.38505371){\makebox(0,0)[lb]{\smash{7}}}%
    \put(0.68335674,0.40085301){\makebox(0,0)[lb]{\smash{8}}}%
    \put(0.74125892,0.38505371){\makebox(0,0)[lb]{\smash{9}}}%
    \put(0.77020997,0.39734205){\makebox(0,0)[lb]{\smash{10}}}%
    \put(0.87153872,0.38505371){\makebox(0,0)[lb]{\smash{11}}}%
  \end{picture}%
\endgroup
\caption{Structure of an optimal schedule  for an instance of 11 jobs.
  The boxes represent job executions where the height equals speed and
  the area  equals the workload of  the job.  The  different colors of
  the boxes  distinguish critical speed, less than  critical speed and
  more than critical speed.  The thick line below represents the mode.
  Finally  the   intervals  at   the  bottom  represent   the  release
  time/deadline  intervals of each  job $j$,  labeled by  its workload
  $w_j$.   Here the  schedule  consists  of 2  blocks  separated by  a
  shutdown interval.  Jobs 1,5,6,7,11  are scheduled at critical speed
  $s^\star$,  while   job  9  is  scheduled  with   higher  speed,  as
  $[r_9,d_9)$ is a dense interval.  Note that the schedule is idle but
  not shutdown between jobs 1 and 2.} \label{fig:structOpt}
\end{figure}

Note that  in case $d_{i -  1} < r_{j +  1}$ or $d_{i -  1} < d_i$
or $r_j < r_{j + 1}$, the subinstance $(i,j)$ is infeasible as the
release time/deadline interval of $i$ or $j$ is restricted
to the empty interval.

We extend also  the definition of the cost  function for subinstances.
The schedule of a subinstance $(i,  j)$, consisting of job set $J$ and
interval  $I$,   is  defined  by  the  functions   $\text{speed}  :  I
\rightarrow  \mathbbm{R^+}, \text{mode}  : I  \rightarrow \{\text{on},
\text{off}\}$ and $\text{job} : I \rightarrow \{\text{none}\} \cup J$.
For  the  mode  cost,  let  $S  :=  \{t \in  I  :  \text{mode}  (t)  =
\text{on}\}$ be the support of the  schedule, and $k$ be the number of
intervals in $I \backslash \cup S$. Then, $c_{\text{mode}} := k L + g|
\cup S|$.  The interpretation is  that if immediately before and after
$I$ the machine  is on, then shutdown intervals at  the borders of $I$
also do generate a wake-up cost.

We choose $d_0$ far enough from $r_1$ such that w.l.o.g. an optimal schedule
for the subinstance $(1,k)$ will  start  with  a shutdown  interval.  A  symmetric
property is true  for subinstances of the form  $(k,n)$. Therefore the
cost of the subinstance $(1,n)$ is consistent with the cost definition
for the complete instance.
Note that the optimum for a subinstance of the form $(i, i - 1)$ equals $\min
\{L, g (r_i - d_{i - 1})\}$.

Now consider all inclusion-wise maximal dense intervals. They partition the
time line into a sequence of alternating dense and sparse intervals.

The following  lemma follows directly from the  definitions. We stress
here that independence of the subschedules is implied by the agreeable
deadline assumption.   Lemma~\ref{lem:structOpt} states that there
is an  optimal schedule,  satisfying the \emph{earliest  deadline first
  property}, which means that whenever  job $j$ is scheduled, all jobs
$i<j$ already completed.  So the agreeable deadline assumption happens
to  be quite  strong, which  permits a  dynamic  programming approach.
However the problem does not become trivial, since one still needs to
decide when the machine is to be shutdown and when to be idle.
\begin{lemma}
  \label{lem:densePartition} Sparse intervals $I$ are associated to pairs $(i,
  j)$, such that the portion of an optimal schedule for the
  original instance restricted to $I$, is also an optimal schedule for the subinstance $(i, j)$.
  Moreover none of these subinstances contain dense intervals.
\end{lemma}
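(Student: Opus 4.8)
The plan is to fix an optimal schedule $\sigma=(\text{mode},\text{speed},\text{job})$ enjoying the four properties of Lemma~\ref{lem:structOpt} and to read the decomposition off its structure. First I would observe that every dense interval has its left endpoint at some release time and its right endpoint at some deadline: by the \emph{dense intervals} property $\sigma$ agrees on the dense part with YDS, whose speed profile is piecewise constant with breakpoints only at release times and deadlines, so a maximal interval on which the speed stays at least $s^\star$ must begin at a release time and end at a deadline. Hence the sparse intervals between consecutive dense intervals are precisely intervals of the form $[d_{i-1},r_{j+1})$; for the leftmost, resp.\ rightmost, sparse interval one takes $i=1$, resp.\ $j=n$, using the conventions $d_0=r_1-L/g$, $r_{n+1}=d_n+L/g$ of Definition~\ref{def:subsinstance} and the earlier remark that for these choices an optimal schedule of the corresponding subinstance may be assumed to begin, resp.\ end, with a shutdown interval.

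Next I would fix a sparse interval $I=[d_{i-1},r_{j+1})$ and let $\{i,\dots,j\}$ be the indices of the jobs that $\sigma$ runs, even partially, inside $I$. Using the \emph{earliest deadline first} property, the fact that release times and deadlines are non-decreasing in the index, and the fact that inside a dense interval YDS runs only those jobs whose whole release/deadline interval is contained in it, I would check that this index set is an interval whose extremes match the subscripts of $[d_{i-1},r_{j+1})$, that jobs $1,\dots,i-1$ finish by $d_{i-1}$ and jobs $j+1,\dots,n$ start no earlier than $r_{j+1}$, and that each $k\in\{i,\dots,j\}$ is executed \emph{entirely} inside $I$, since its execution times must lie between those of the lower- and higher-indexed jobs and cannot spill into the neighbouring dense intervals, which $\sigma$ keeps on and fully occupied. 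Then $[r_k,d_k)\cap I\ne\emptyset$ for each such $k$, and the restriction $\sigma|_I$ obeys the defining properties of a schedule for the subinstance $(i,j)$ of Definition~\ref{def:subsinstance}, clipping the job windows to $I$ preserving feasibility because $\sigma$ already runs these jobs inside $I$. Thus $\sigma|_I$ is a feasible schedule of $(i,j)$.

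Optimality of $\sigma|_I$ I would prove by an exchange argument: if $(i,j)$ had a schedule $\sigma'_I$ of strictly smaller cost, then replacing $\sigma|_I$ by $\sigma'_I$ inside $I$ and keeping $\sigma$ elsewhere yields a feasible global schedule (jobs $<i$ and $>j$ are untouched and handled by $\sigma$ outside $I$, jobs in $\{i,\dots,j\}$ are handled by $\sigma'_I$ within their clipped windows). The boundary term $kL$ in the subinstance cost was introduced precisely so that the total cost splits additively over the dense and sparse pieces: no maximal off-interval of $\sigma$ straddles a border of an interior sparse interval, because just outside such a border the machine is running a job of a dense interval and is on; for the two extremal sparse intervals the very accounting that makes the cost of the subinstance $(1,n)$ coincide with that of the whole instance applies, assuming (by the choice of $d_0$ and $r_{n+1}$) that $\sigma'_I$ begins, resp.\ ends, with a shutdown just as $\sigma|_I$ does. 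The new schedule then costs $c(\sigma)-c(\sigma|_I)+c(\sigma'_I)<c(\sigma)$, contradicting optimality of $\sigma$.

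Finally, for the ``moreover'' part, I would show that no subinstance $(i,j)$ obtained this way contains a dense interval; this is the step I expect to require an actual idea, the rest being routine unwinding of the definitions. Suppose $I'\subseteq I$ were dense for the subinstance and let $K\subseteq\{i,\dots,j\}$ be the jobs whose clipped window lies in $I'$, so that $\sum_{k\in K}w_k\ge s^\star|I'|$. A job of $K$ can protrude out of $I'$ only at an endpoint of $I'$ coinciding with an endpoint of $I$, hence only into the maximal dense interval immediately left of $I$ or the one immediately right of it. Let $E$ be obtained from $I'$ by enlarging it, on each side where it touches the boundary of $I$, so as to engulf that neighbouring dense interval. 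Then every original window of a job of $K$ lies in $E$ (here the non-decreasing release times and deadlines are used), and since the job set of $K$ (indices in $\{i,\dots,j\}$) is disjoint from those of the engulfed dense intervals (indices outside $\{i,\dots,j\}$), the workload contained in $E$ is at least $s^\star|I'|+s^\star(|E|-|I'|)=s^\star|E|$. Hence $E$ is a dense interval meeting the sparse interval $I$, contradicting that the dense intervals are disjoint from the sparse ones. The only other place needing care is the boundary bookkeeping of the exchange argument, which is routine once the cost convention of Definition~\ref{def:subsinstance} is spelled out.
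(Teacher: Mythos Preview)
The paper does not actually prove this lemma; it merely asserts that it ``follows directly from the definitions'', adding only the remark that the earliest-deadline-first property together with the agreeable-deadline assumption is what makes the subschedules independent. Your write-up supplies precisely the details the paper leaves implicit: the identification of each sparse gap with an interval $[d_{i-1},r_{j+1})$ via the YDS structure of the dense part, the verification through EDF and agreeability that jobs $i,\dots,j$ are executed entirely inside that gap, the exchange argument for optimality (using the cost convention for subinstances so that the total cost decomposes additively over the dense/sparse partition, with the machine necessarily \emph{on} at the common boundaries), and, for the ``moreover'' clause, the enlargement trick that absorbs the adjacent maximal dense intervals to manufacture a strictly larger dense interval of the original instance and thereby contradict maximality. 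All of this is correct; in particular your disjointness observation (jobs contributing to the engulfed dense neighbour have indices outside $\{i,\dots,j\}$, while $K\subseteq\{i,\dots,j\}$) is exactly what makes the workload bound for $E$ go through, and agreeability is what guarantees that the original windows of the jobs in $K$ do not overshoot $E$. So your approach is the natural one and coincides with what the paper intends; you have simply written it out in full.
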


\section{Suffixes and  prefixes}

In  this section,  we consider  an  optimal schedule  of an  arbitrary
subinstance consisting of a job set  $J$ and an interval $I$ such that
all  subintervals of  $I$ are  sparse. Whenever,  in this  section, we
refer to release times/deadlines  $r_k, d_{\ell}$, they are restricted
to $I$.

\begin{lemma}
  \label{lem:belowCriticalSpeed}For all times $t \in I$, $\text{speed} (t)
  \leqslant s^{\star}$.
\end{lemma}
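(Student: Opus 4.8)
The plan is to argue by contradiction: suppose the optimal schedule for the sparse subinstance runs some job $j$ at speed $s > s^\star$ on a set of time points $T$ of positive measure. The idea is to exhibit a modified schedule of strictly smaller cost, which contradicts optimality and thus proves the claim. The natural modification is to slow down the execution of $j$: spread its workload over a slightly longer time window so that the new speed is closer to $s^\star$. The delicate point is that we need room to do this without violating the release-time/deadline constraints or colliding with other scheduled jobs.

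**First I would** use the structural properties from Lemma~\ref{lem:structOpt}. By the \textbf{job span} property, whenever $j$ runs it runs at the single speed $s$, and moreover at every time $u\in[r_j,d_j)$ with $\text{mode}(u)=\text{on}$ we have $\text{speed}(u)\ge s > s^\star$. Now I would invoke the hypothesis that the whole interval $I$ is sparse, i.e.\ contains no dense subinterval; in particular $[r_j,d_j)$ (restricted to $I$) is sparse, so its density is $<s^\star$. This means the total workload of jobs whose release/deadline interval lies inside $[r_j,d_j)$ is strictly less than $s^\star\cdot|[r_j,d_j)|$. But by \textbf{earliest deadline first} together with \textbf{job span}, the portion of $[r_j,d_j)$ on which the machine is on is filled entirely by jobs running at speed $\ge s^\star$ — and those jobs are exactly the ones confined to this window. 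Comparing the workload bound with the "all speeds $\ge s^\star$" fact forces the machine to be \emph{off} for a positive-measure subset of $[r_j,d_j)$; otherwise the executed workload would be at least $s^\star|[r_j,d_j)|$, contradicting sparsity. (One has to be a little careful that jobs running inside $[r_j,d_j)$ might not be fully contained in it, but by the agreeable-deadline/EDF structure the jobs overlapping $[r_j,d_j)$ and running there at times $\le$ the times $j$ runs have index $\le j$ and deadline $\le d_j$, and those running after have release time $\ge r_j$; a short case analysis pins down that the relevant workload is what sits inside the window.)

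**Then** having found an idle-and-shutdown gap inside $[r_j,d_j)$, I would move a sliver of $j$'s execution into that gap: take $\varepsilon$ workload of $j$ currently run at speed $s$ (occupying time $\varepsilon/s$) and instead run it during part of the gap. This strictly decreases the speed cost contribution of that sliver from $s^{\alpha-1}\varepsilon$ toward $s^{\star\,\alpha-1}\varepsilon$ (for small enough $\varepsilon$, any redistribution toward a flatter profile strictly lowers $\int \text{speed}^\alpha$ by convexity), while the mode cost can only decrease or stay the same — we are \emph{removing} an off-interval or shrinking it, and $c_{\text{mode}}=kL+g|\cup S|$ is nondecreasing in $|\cup S|$ only through the $g$ term, but turning the sliver of the gap "on" adds at most $g\varepsilon/s^\star$ to the mode cost, which is dominated by the $\Theta(1)$-per-unit speed-cost savings once $s$ is bounded away from $s^\star$. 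Choosing $\varepsilon$ small makes the net change strictly negative. This contradicts optimality.

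**The main obstacle** I anticipate is not the energy bookkeeping — that is routine convexity plus the observation that $(s^\alpha+g)/s$ is minimized at $s^\star$ — but rather establishing cleanly that a positive-measure \emph{shutdown} (or at least idle) gap must exist inside $[r_j,d_j)$ when $j$ runs above critical speed. This is where sparsity of $I$ and the EDF/job-span structure must be combined carefully, handling the boundary jobs that straddle the endpoints of $[r_j,d_j)$. Once that gap is in hand, the local exchange argument is standard. An alternative, possibly cleaner route is to avoid identifying the gap explicitly: globally rescale $j$'s execution to run over a longer window chosen so the new schedule is still feasible (EDF leaves slack because of sparsity), and simply verify the cost drops; but I expect the "gap" formulation to be the most transparent.
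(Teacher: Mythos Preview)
Your overall plan---assume some speed exceeds $s^\star$, locate an adjacent off interval, and slow down the fast execution into it---is exactly the paper's strategy. But the paper sidesteps precisely the obstacle you anticipate by a different choice of object to analyse, and your version of that step has a real gap.

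You fix a job $j$ running at speed $s>s^\star$ and try to show that $[r_j,d_j)$ must contain a shutdown gap, via the counting argument ``if the machine were on throughout, the executed workload would be $\ge s^\star|[r_j,d_j)|$, contradicting sparsity''. This does not work as stated: the workload \emph{executed} in $[r_j,d_j)$ can include jobs whose release--deadline window is not contained in $[r_j,d_j)$ (a job $i'<j$ with $r_{i'}<r_j$ running past $r_j$, or a job $k'>j$ with $d_{k'}>d_j$ starting before $d_j$), whereas the density of $[r_j,d_j)$ counts only contained jobs. So ``executed workload $\ge s^\star|[r_j,d_j)|$'' does not contradict ``density $<s^\star$''. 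You flag this and promise ``a short case analysis'', but that analysis is the entire content of the lemma, and it is not short if one sticks with $[r_j,d_j)$.

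The paper's fix is to look not at a single job's window but at a time $t$ of \emph{maximum} speed and the inclusion-maximal interval $A$ on which the speed is constantly $\text{speed}(t)$. Let $i,\ldots,k$ be the jobs scheduled in $A$. If $A=[r_i,d_k)$ then $A$ itself is a dense interval, contradicting the hypothesis on $I$. Otherwise, say $\sup A<d_k$; at the point $u=\sup A$ we have $u\in[r_k,d_k)$, and by job span for $k$ together with maximality of $\text{speed}(t)$, the mode at $u$ must be \emph{off}. So the off gap sits immediately to the right of $A$, no straddling analysis needed. The exchange is then simply to stretch the execution of $k$ from $[t',u)$ to $[t',u')$ with $u'=t'+\delta(u-t')$ and speed divided by $\delta$; for $\delta>1$ small this adds $g(u'-u)$ to the mode cost and subtracts more from the speed cost precisely because $\text{speed}(t)>s^\star$. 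This also avoids the bookkeeping issues in your exchange (what happens to the time vacated inside $j$'s old execution, whether a new on-island is created, etc.).

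In short: your plan is right, but choosing the maximal max-speed block instead of a job's window is what makes the ``find the gap'' step immediate.
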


\begin{proof}
  Let  $t$ be  a time  that maximizes  $\text{speed} (t)$,  and assume
  $\text{speed}  (t) >  s^{\star}$ for the sake of contradiction. We
  consider an inclusion-wise  maximal interval $A \ni t$  on which the
  speed is constantly $\text{speed} (t)$. Let $i, \ldots,  k$,
  $i  \leqslant  \text{job} (t)  \leqslant   k$, be the jobs scheduled in this
  interval.  If $A =  [r_i, d_k)$, then  $A$ is a dense  interval, a
  contradiction to  Lemma  \ref{lem:densePartition}. Thus, the  inclusion  $A
  \subseteq [r_i,  d_k)$ is strict.  Assume $d_k >  u$ for $u  = \max
  A$ (the  other case  is symmetric). By  Lemma~\ref{lem:structOpt}, we
  have $\text{mode} (u)  = \text{off}$, and there is  a time $t'$ such
  that job $k$  is scheduled  in $[t',  u)$. For a  small enough
  $\delta \geqslant 1$, the execution of job $k$ can be extended to $[t',  u')$ for $u' =
  t' +  \delta (u -  t')$ and lower  its speed to $\text{speed}  (t) /
  \delta$. This   strictly decreases the  overall cost, a contradiction to the optimality
  of the schedule.
\end{proof}

The support of  the schedule consists of blocks  separated by shutdown
intervals. We shall show now that the boundaries  of  these  blocks  have a  particular
structure (see Figure~\ref{fig:prefix}).

\begin{definition}
  A suffix is  a job pair $(a,  b)$ such that all jobs  $a, \ldots, b$
  are scheduled at critical speed between  $r_a$ and $u$ with $u = r_a
  +  (w_a   +  \ldots  +  w_b)/s^{\star}$,  and  $\text{mode}  (u)  =
  \text{off}$. The definition of a prefix is just symmetric.
\end{definition}

\begin{figure}[ht]
\centerline{\input{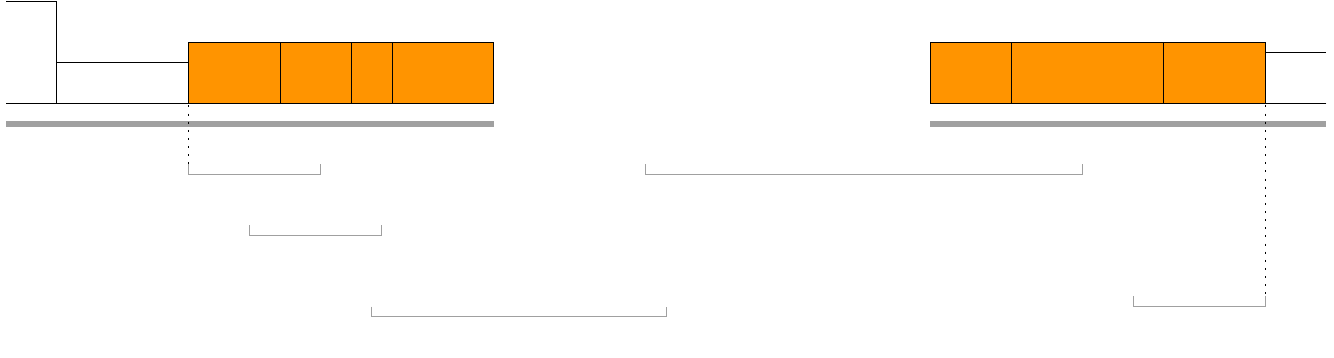_t}}
  \caption{Illustration of a suffix  $(a,b)$ and a prefix $(b+1,c)$ in
    a schedule.  Note that job $a$  starts at its release time and job
    $c$ ends at its deadline.}
\label{fig:prefix}
\end{figure}

\begin{lemma}
  \label{lem:suffixPrefix}Let  $[t, u)$  be an  inclusion-wise maximal
  shutdown interval in $I$, that  is $\text{mode} (t')  = \text{off}$
  for all $t' \in [t, u)$. If  $t$ is not the start of $I$, then there
  is a  suffix $(a, b)$ ending  at $t = r_a  + (w_a + \ldots  + w_b) /
  s^{\star}$. If $u$ is not the end of $I$, then there is a prefix ($b
  +  1, c)$  starting at  $u  = d_c  - (w_{b  +  1}+ \ldots  + w_c)  /
  s^{\star}$.  Moreover, if both cases hold ($\inf I < t < u <\sup I$)
  then without loss of generality $r_{b + 1} > t$.
\end{lemma}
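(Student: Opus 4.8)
The plan is to prove the suffix part — assume $t$ is not the start of $I$ — since the prefix part follows by reversing time, and the ``moreover'' clause is a short separate argument. Fix an optimal schedule satisfying Lemma~\ref{lem:structOpt}; I will repeatedly pass to another optimal schedule still satisfying that lemma. First two normalisations near $t$, obtained by exchange. (i) The block $B$ ending at $t$ does not end with an idle interval: converting such an interval into shutdown (merged with $[t,u)$) strictly lowers $c_{\text{mode}}$ while leaving $c_{\text{speed}}$ and the number of blocks unchanged, contradicting optimality. So some job $b$ is executed in an interval ending at $t$. (ii) This job $b$ runs at the critical speed: otherwise by Lemma~\ref{lem:belowCriticalSpeed} it runs at speed $s<s^\star$, and rerunning the last piece of $b$ at speed $s^\star$ finishes that workload earlier, turning the freed time into shutdown; since $\sigma\mapsto\sigma^{\alpha-1}+g/\sigma$ is minimised uniquely at $s^\star$, this strictly decreases $c_{\text{speed}}+g\,|\cup S|$ without affecting anything else — a contradiction. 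By the job span property all of $b$ runs at $s^\star$, and after the harmless consolidation of idle intervals inside a block, the jobs of $B$ run back to back.

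Now let $(a,b)$ be the largest index range with jobs $a,a+1,\dots,b$ all executed at $s^\star$, consecutively and back to back, with $b$ finishing at $t$, and let $\tau=t-(w_a+\dots+w_b)/s^\star$ be the start of job $a$. A job never starts before its (restricted) release time, so $\tau\ge r_a$, and it suffices to show $\tau=r_a$, for then $(a,b)$ is the desired suffix ending at $t=r_a+(w_a+\dots+w_b)/s^\star$. Assume $\tau>r_a$ and inspect the instant just before $\tau$. If a job runs there, then, that instant lying in $[r_a,d_a)$ with the machine on, the job span property of $a$ forces its speed to be $\ge s^\star$, hence $=s^\star$ by Lemma~\ref{lem:belowCriticalSpeed}; by the earliest-deadline-first property, and since there is no room between consecutive executions, that job can only be $a-1$, run back to back with $a$ — contradicting the maximality of $(a,b)$. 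Otherwise the machine is off just before $\tau$, so there is a maximal shutdown interval $[\rho,\tau)$. Sliding the whole run $a,\dots,b$ to the left shrinks $[\rho,\tau)$ and grows $[t,u)$ by the same amount, keeping $c_{\text{speed}}$ and $c_{\text{mode}}$ unchanged — unless, when $r_a\le\rho$, the slide can be made to empty $[\rho,\tau)$, which merges $B$ with its predecessor, drops the number of blocks and saves $L>0$: a contradiction. Hence $r_a>\rho$ and we may slide until job $a$ starts exactly at $r_a$; the result is an optimal schedule satisfying Lemma~\ref{lem:structOpt} whose mode function equals $\text{on}$ strictly earlier, contradicting the domination property. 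Therefore $\tau=r_a$.

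For the last statement, suppose $\inf I<t<u<\sup I$, a suffix $(a,b)$ ends at $t$, a prefix $(b+1,c)$ starts at $u$, and $r_{b+1}\le t$. Reschedule job $b+1$ at speed $s^\star$ in $[t,\,t+w_{b+1}/s^\star)$ — feasible because $r_{b+1}\le t$ and $t+w_{b+1}/s^\star<u+w_{b+1}/s^\star\le d_{b+1}$ — appending it to the left block and leaving jobs $b+2,\dots,c$ in place; then the shutdown interval $[t,u)$ merely slides right by $w_{b+1}/s^\star$, so all three cost components are unchanged (and were the right block emptied, the block count would drop, contradicting optimality, so the move is genuinely a tie). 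Iterating and relabelling, we obtain an optimal schedule satisfying Lemma~\ref{lem:structOpt} in which $r_{b+1}>t$ for the shutdown interval under consideration; in fact, since each such move makes the mode function $\text{on}$ earlier, it is already forced by the domination property.

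I expect the case analysis for ``what lies just before $\tau$'' to be the main obstacle: closing every branch needs the right interplay of the job span property (ruling out a below-critical execution abutting a critical-speed execution inside the latter's window), the earliest-deadline-first property (forcing such an abutting job to be $a-1$), and the domination property (which alone yields a contradiction in the ``free slide'' branch, where no strict cost decrease is available). One must also verify, at each modification, that the new schedule is still optimal and still satisfies Lemma~\ref{lem:structOpt}, which is routine but needed to chain the arguments.
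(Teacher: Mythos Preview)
Your approach is essentially the paper's: show the job just before $t$ runs at $s^\star$, then work backwards using the job-span and domination properties, with the prefix part symmetric and the ``moreover'' clause handled by shifting $b+1$ across the gap.

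There is, however, one real gap in your backwards step. You take $(a,b)$ to be the \emph{maximal} back-to-back critical-speed range ending at $t$ and set out to prove $\tau=r_a$; in the ``machine off just before $\tau$'' branch you then slide the whole run $a,\dots,b$ to the left, either until $[\rho,\tau)$ empties or until $a$ starts at $r_a$. But that slide need not be feasible: some intermediate job $k$ with $a<k\le b$ may already start exactly at its release time $r_k$, and then it cannot move left at all. In such a configuration your conclusion $\tau=r_a$ can genuinely fail, even though the lemma remains true because $(k,b)$ is itself the required suffix. The fix is immediate --- slide by $\min_{a\le k\le b}(\tau_k-r_k)$; if this minimum is positive you contradict domination with an arbitrarily small shift, and if it is zero you already have the suffix $(k,b)$ --- but as written the step breaks. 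The paper sidesteps the issue by extending the range one job at a time: at each stage only the current leftmost job is (hypothetically) shifted by $\varepsilon$, so feasibility is trivially ensured by $t_0>r_b$ alone, and the induction halts precisely at the first job that does start at its release time.

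A smaller point: the phrase ``harmless consolidation of idle intervals inside a block'' is not justified in general (relocating idle time can violate release times or deadlines) and is in any case unnecessary here, since the job-span property applied to $a$ already forces speed $=s^\star$, hence no idleness, throughout $[r_a,d_a)$ whenever the machine is on.
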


\begin{proof}
  Suppose that  there is an  execution interval $[t_0, t)$  where some
  job $b  = \text{job}  (t_0)$ is scheduled  at $\text{speed}  (t_0) <
  s^{\star}$. For a small enough $\delta>1$ let $t' := t_0 + (t - t_0)
  / \delta$. Consider  a new schedule where the  execution interval is
  \emph{compressed} to  $[t_0, t')$, the speed in  there is multiplied
  $\delta$, and the  shutdown interval is extended to  $[t', u)$. This
  new   schedule  has   a  strictly   decreased   cost,  contradicting
  optimality.

  This shows that if $t$ is not the start of $I$, then some job $b$ is
  scheduled right before $t$, say in some interval $[t_0, t)$, at critical
  speed. We will now show that there is a job $a$ such that between $r_a$ and
  $t$, jobs $a, \ldots, b$ are all scheduled at critical speed. If $t_0 = r_b$, we
  simply set $a = b$. Otherwise assume that $r_b < t_0$. If right before $t_0$ the
  schedule mode is off, then we can slightly shift the execution interval
  of $b$ to $[t_0 - \varepsilon, t - \varepsilon)$, to obtain a schedule
  of the same cost but with dominating work towards the beginning. W.l.o.g.
  we can assume that right before $t_0$ a job $b - 1$ is
  scheduled in some interval $[t_1, t_0)$. By the job span property of Lemma~\ref{lem:structOpt}
  and Lemma \ref{lem:belowCriticalSpeed} it is scheduled at speed $s^{\star}$. We
  iterate the arguments on $t_1$ and $b - 1$, eventually reaching a job $a$
  with the required property.

  The  same argument applied  symmetrically shows  the existence  of a
  prefix $(b + 1, c)$ if $u$ is not the end of $I$. Now if both suffix
  and prefix exist,  and $r_{b + 1} \leqslant t$,  then we could shift
  the execution of $b  + 1$ from $[u, w_{b + 1}  / s^{\star})$ to $[t,
  w_{b  +  1}  /  s^{\star})$,  yielding a  schedule  with  more  work
  dominating  towards the  beginning.  The  cost of  the  new schedule
  remains either the same or it is reduced by $L$, if $b+1$ were alone
  in its  block. Therefore  we can assume  w.l.o.g. that $t < r_{b + 1}$.
\end{proof}

To proceed to our dynamic programming algorithm we need one more property of
suffixes and prefixes implied by the following definition.

\begin{definition}
For a given subinstance $(i, j)$ we define two functions $f, h : \{i, \ldots,
j\} \rightarrow \{i, \ldots, j\}$ as follows: $f (a)$ is the highest index job
$b\leq j $ such that for all $a\leq k<b$, $r_a + (w_a + \ldots +
w_k) / s^{\star} \geqslant r_{k + 1}$, while $h (k)$ is the highest index job
$c\leq j$ such that for all $k<\ell \leq c\}$, $d_c - (w_{\ell} +
\ldots + w_c) / s^{\star} \leqslant d_{\ell - 1}$.
\end{definition}

\begin{lemma}
  \label{lem:fh}Any suffix $(a, b)$ satisfies $b = f (a)$ and any prefix $(k,
  c)$ satisfies $c = h (k)$.
\end{lemma}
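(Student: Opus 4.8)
The claim has two symmetric halves; I will argue the suffix statement, that any suffix $(a,b)$ satisfies $b = f(a)$, and the prefix statement follows by the mirror argument. By definition of a suffix, jobs $a, \ldots, b$ are all scheduled at critical speed in the interval $[r_a, u)$ with $u = r_a + (w_a + \cdots + w_b)/s^\star$, and $\mathrm{mode}(u) = \mathrm{off}$. By Lemma~\ref{lem:structOpt} the schedule obeys earliest-deadline-first, so jobs $a, \ldots, b$ are executed in this order, consecutively, with no idle or shutdown time between them (any gap could be closed to save ground dissipation, or the whole block is as described). Hence for each $k$ with $a \le k < b$, the job $k+1$ begins executing exactly at time $r_a + (w_a + \cdots + w_k)/s^\star$, and for this to be feasible we need $r_{k+1} \le r_a + (w_a + \cdots + w_k)/s^\star$ — otherwise job $k+1$ would be scheduled before its release time, violating property~3 of a schedule. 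This shows that $b$ satisfies the defining inequalities of $f(a)$, hence $b \le f(a)$.

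**The reverse inequality.** It remains to rule out $b < f(a)$, i.e. to show the suffix cannot be ``too short''. Suppose $b < f(a)$. The defining property of $f(a)$ then guarantees that $r_{b+1} \le r_a + (w_a + \cdots + w_b)/s^\star = u$, that is, job $b+1$ is already released by time $u$. But $\mathrm{mode}(u) = \mathrm{off}$, so immediately after $u$ the machine is shut down for some maximal interval $[u, u')$. I would then invoke the shift argument from the proof of Lemma~\ref{lem:suffixPrefix}: since $b+1$ is released at or before $t=u$, we may move (part of) its execution from after the shutdown interval into an interval starting at $u$, running it at critical speed, which produces a schedule of cost no larger — and in fact this is exactly the w.l.o.g. normalization ``$t < r_{b+1}$'' established at the end of Lemma~\ref{lem:suffixPrefix}, contradicting $r_{b+1} \le u = t$. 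Thus in the optimal schedule under consideration we cannot have $b < f(a)$, and therefore $b = f(a)$.

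**Main obstacle and how I would handle it.** The delicate point is the reverse inequality: a priori a ``suffix'' is just a structural feature of the fixed optimal schedule, and there is no reason the schedule should extend a block maximally just because the release times permit it — unless we have already normalized the schedule. So the proof really hinges on being allowed to assume the w.l.o.g. conditions baked into Lemma~\ref{lem:suffixPrefix} (in particular $t < r_{b+1}$ when both a suffix and a prefix border the shutdown interval, and the ``shift towards the beginning'' normalizations used throughout its proof). I would therefore make explicit at the outset that we work with an optimal schedule already satisfying all the structural normalizations of Lemmas~\ref{lem:structOpt}, \ref{lem:belowCriticalSpeed} and \ref{lem:suffixPrefix}; then the case $b<f(a)$ is exactly the case excluded by those normalizations, and the argument closes cleanly. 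The prefix statement $c = h(k)$ is obtained by running the identical argument with time reversed, using deadlines in place of release times and the symmetric shift.
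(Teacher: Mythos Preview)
The paper states this lemma without proof, so there is nothing to compare against; your argument is the natural one and it is correct. The inequality $b \le f(a)$ follows immediately from feasibility (by EDF each job $k+1$ in the suffix starts at time $r_a + (w_a+\cdots+w_k)/s^\star$, and this must be at least $r_{k+1}$), and the reverse inequality is precisely the normalization $t < r_{b+1}$ from Lemma~\ref{lem:suffixPrefix}, as you identify. The one point worth making explicit is that the hypothesis ``both cases hold'' of Lemma~\ref{lem:suffixPrefix} is indeed satisfied when $b<f(a)$: since $f(a)\le j$ we have $b+1\le j$, so job $b+1$ belongs to the subinstance and by EDF must be scheduled after the shutdown interval, forcing that interval not to reach $\sup I$ and guaranteeing the prefix exists.
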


The function $f$ requires a little more attention. Since by Lemma
\ref{lem:belowCriticalSpeed}, the job $a - 1$ cannot be scheduled with higher than
critical speed, we can assume that a suffix $(a, b)$ is such that $a$ is the
smallest index job with $f (a) = b$. So from now on we restrict the domain of
$f$ to those jobs. This allows $f$ to be invertible, i.e. $a = f^{- 1}
(b)$.  Note that by definition of $f$, the job $j$ is in the co-domain
of $f$, meaning that $f^{-1}(j)$ is defined.

\section{The dynamic program}

For every  subinstance $(i,  j)$, we denote  by $Y_{i, j}$  the minimum
$c_\text{speed}$ cost
plus  $g (r_{j  +  1}  - d_{i  -  1})$, and  by  $O_{i,j}$ the  minimum
$c_\text{speed}+c_\text{mode}$ cost.
If subinstance $(i,j)$ is infeasible we set $Y_{i,j},O_{i,j}$ to $+\infty$.
For convenience we denote $g^{\star} := (g + (s^{\star})^{\alpha}) / s^{\star}$.

\begin{figure}[ht]
\centerline{\input{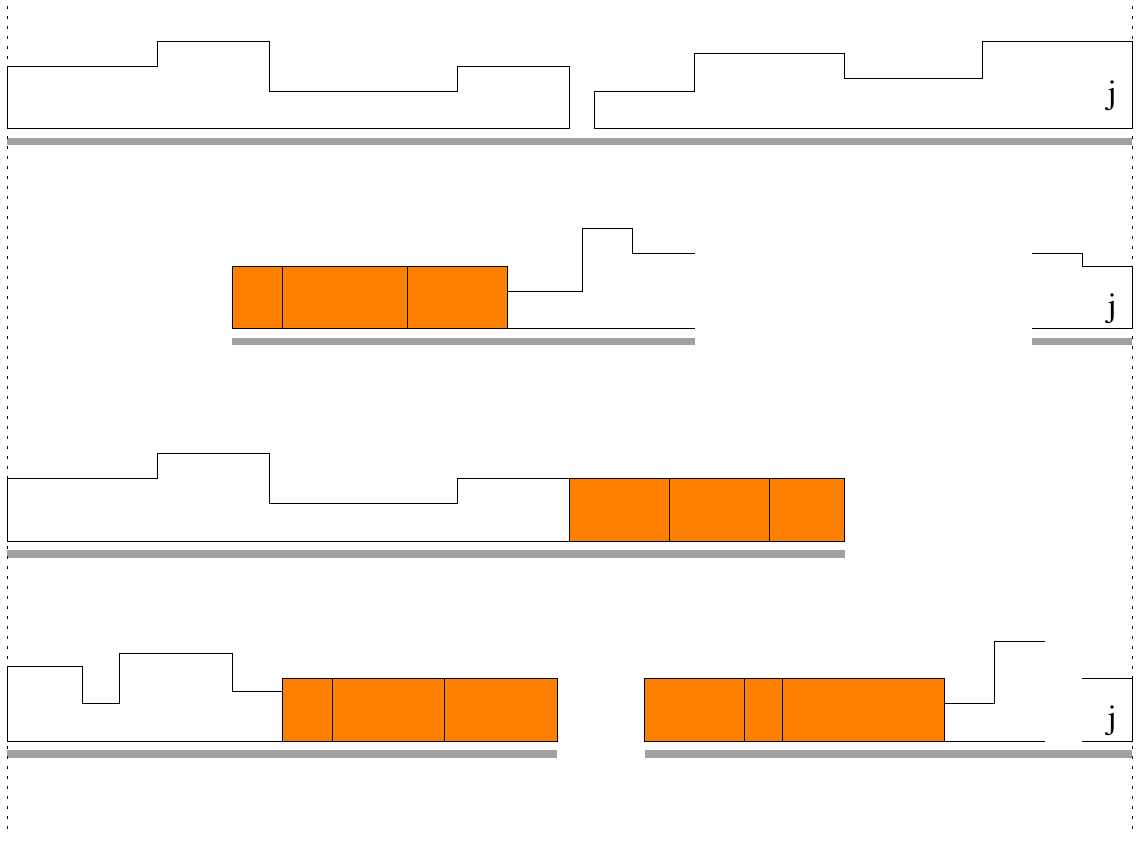_t}}
  \caption{The four cases in (\ref{eq:defO}) of the dynamic program.}
\label{fig:fourCases}
\end{figure}

\begin{theorem}
  The value $O_{i,j}$ satisfies the following recursion.
  If $j=i-1$, then $O_{i,j}=min\{L, g(r_{j+1}-d_{i-1})\}$, otherwise, let $k=f^{-1}(j)$.

\begin{equation}
\label{eq:defO}
  O_{i,j} = \min \left\{ \begin{array}{l}
       Y_{i,j}\\
         L + g^{\star} (w_i + \ldots + w_{h (i)}) + O_{h (i) + 1, j}\\
       Y_{i, k - 1} + g^{\star} (w_k + \ldots + w_j) + L\\
       \min Y_{i,a - 1} + g^{\star} (w_a + \ldots + w_b) + L + g^{\star} (w_{b +
       1} + \ldots + w_c) + O_{c + 1, j},
     \end{array} \right.
\end{equation}
  where the inner minimization is over all jobs $a\in\{i+1, \ldots j\}\cup\{b,c\}$ with $b = f(a), b < j$ and $c = h (b + 1)$.
  As usual if there are no such jobs, the value of this inner minimization is $+ \infty$.
\end{theorem}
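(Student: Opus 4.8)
The plan is to prove the recursion by establishing, in both directions, that each term on the right corresponds to a structurally distinct class of optimal schedules for the subinstance $(i,j)$, and that every optimal schedule falls into (at least) one such class. The base case $j=i-1$ is exactly the remark already recorded after Definition~\ref{def:subsinstance}, so the work is entirely in the case $j\geq i$. By Lemma~\ref{lem:densePartition} the subinstances arising here contain no dense intervals, so Lemmas~\ref{lem:belowCriticalSpeed}--\ref{lem:fh} apply throughout. I would fix an optimal schedule of $(i,j)$ satisfying the properties of Lemma~\ref{lem:structOpt}, and examine its support, which is a sequence of blocks separated by maximal shutdown intervals.

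First I would case-split on the support. \textbf{Case 1:} the support is a single block spanning all of $I=[d_{i-1},r_{j+1})$, i.e.\ $\text{mode}(t)=\text{on}$ everywhere. Then $c_\text{mode}=g(r_{j+1}-d_{i-1})$, and minimizing $c_\text{speed}$ over all feasible single-block schedules is exactly what the YDS algorithm does, so the cost is $Y_{i,j}$; this matches the first term. \textbf{Case 2:} there is at least one shutdown interval. Let $[t,u)$ be the \emph{first} maximal shutdown interval. By the domination property of Lemma~\ref{lem:structOpt} and our choice of $d_0$ we may assume $t>d_{i-1}$ is impossible only when the schedule starts shut down; more carefully, either the schedule starts with a shutdown (so $t=d_{i-1}$) or it starts with a block and $t>d_{i-1}$. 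If $t=d_{i-1}$, then by Lemma~\ref{lem:suffixPrefix} applied to this shutdown interval there is a prefix $(i,c)$ with $c=h(i)$ starting at $u=d_c-(w_i+\ldots+w_c)/s^\star$, this prefix runs at critical speed contributing $g^\star(w_i+\ldots+w_{h(i)})$ to $c_\text{speed}+c_\text{mode}$, the shutdown before it contributes $L$, and the remainder of the schedule on $[u,r_{j+1})$ is an optimal schedule of $(h(i)+1,j)$ of cost $O_{h(i)+1,j}$; this is the second term.

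\textbf{Case 3:} the schedule starts with a block (so $t>d_{i-1}$) and the last block is preceded by a shutdown that ends the overall list of shutdowns, i.e.\ the schedule ends with a block. Consider the \emph{last} maximal shutdown interval $[t,u)$. Since $u$ is not the end of $I$ (the schedule ends with a block), Lemma~\ref{lem:suffixPrefix} gives a suffix $(a,b)$ ending at $t$; and because this is the last shutdown, $u$ is the start of $I$ only if $t=d_{i-1}$, which is excluded, so in fact the suffix-ending-at-$t$ is preceded by a block or is the whole start — here $a$ could be $i$ or larger. The segment after $u$ runs at critical speed (it is the tail block, which by Lemma~\ref{lem:suffixPrefix}'s symmetric argument and the fact that it ends at $r_{j+1}$ with nothing after must be all-critical, a prefix-type block with $b+1,\ldots,j$), contributing $g^\star(w_{b+1}+\ldots+w_j)+L$, and everything before $t$ is an optimal schedule of $(i,b)$; writing $b=f(a)$ and noting that the smallest such $a$ with $f(a)=b$ together with the constraint that $b=j$ forces $a=f^{-1}(j)=k$, we get $Y_{i,k-1}+g^\star(w_k+\ldots+w_j)+L$, the third term. \textbf{Case 4:} the schedule starts and ends with a block, and has a shutdown interval $[t,u)$ with both $t$ not the start and $u$ not the end of $I$. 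Then Lemma~\ref{lem:suffixPrefix} yields a suffix $(a,b)$ ending at $t$ and a prefix $(b+1,c)$ starting at $u$ with, w.l.o.g., $r_{b+1}>t$; set $b=f(a)$, $c=h(b+1)$, $b<j$. The portion before $t$ is an optimal schedule of $(i,a-1)$ of cost $Y_{i,a-1}$ (it ends with the block-start of the suffix, so it is a single-block-type instance... more precisely its cost is $Y_{i,a-1}$ since its support is contiguous up to $t$), the suffix contributes $g^\star(w_a+\ldots+w_b)$, then $L$, then the prefix contributes $g^\star(w_{b+1}+\ldots+w_c)$, and the portion after $c$'s completion is an optimal schedule of $(c+1,j)$ of cost $O_{c+1,j}$; taking the minimum over admissible $a$ gives the fourth term.

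For the converse ($O_{i,j}\leq$ right side) I would, for each term, explicitly assemble a feasible schedule of that cost: YDS for the first; a critical-speed prefix followed by a recursively optimal schedule for the second; symmetrically for the third; and a recursively optimal head, critical-speed suffix, shutdown, critical-speed prefix, recursively optimal tail for the fourth — checking feasibility via the defining inequalities of $f$ and $h$ (which guarantee no deadline or release-time violation inside the critical-speed segments) and checking that the mode cost is accounted correctly, including the convention about shutdowns at the borders of a subinstance interval. The main obstacle I anticipate is the bookkeeping in the forward direction around which block a shutdown is ``first'' or ``last'' and how the $Y$ versus $O$ quantities attach to the residual pieces: one must argue that the head piece before the first relevant shutdown can be taken to have contiguous support (hence cost $Y$, not $O$) using the domination property, while the tail piece is a genuine independent subinstance (hence cost $O$). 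Reconciling the $a=i$ versus $a>i$ boundary — why the inner minimization ranges over $a\in\{i+1,\ldots,j\}\cup\{b,c\}$ rather than including $a=i$, which is already covered by the second term — is the subtle point, and rests on the reduction in Lemma~\ref{lem:fh} that lets us restrict $f$ to minimal preimages together with Lemma~\ref{lem:belowCriticalSpeed} ruling out job $a-1$ at super-critical speed.
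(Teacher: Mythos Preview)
Your overall strategy --- case on the shutdown structure of an optimal schedule and match each case to a term of the recursion --- is the paper's, but your case split is broken, and this is a genuine gap rather than the bookkeeping nuisance you flag at the end.

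First, your four cases do not cover all possibilities. Cases~3 and~4 both stipulate that the schedule starts and ends with a block, so neither handles a schedule that starts with a block and \emph{ends} with a shutdown. Yet that configuration is exactly what produces the third term: the (first and only) shutdown extends to $r_{j+1}$, all jobs lie before $t$, the suffix is $(k,j)$ with $k=f^{-1}(j)$, and the head on $[d_{i-1},r_k)$ is shutdown-free, hence costs $Y_{i,k-1}$.

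Second, your Case~3 is internally inconsistent. You switch to the \emph{last} shutdown interval, argue that jobs $b+1,\ldots,j$ form the tail block after $u$ (so $b<j$), and then a few lines later impose $b=j$ to force $a=f^{-1}(j)=k$; these cannot both hold. More fundamentally, once you look at the last shutdown rather than the first, the portion before $t$ may itself contain shutdowns, so its cost is an $O$-value, not the $Y$-value the recursion demands. The same defect reappears in Case~4: you never specify that $[t,u)$ is the first shutdown, so the claim ``its support is contiguous up to $t$'' is unjustified. (A smaller point: your assertion that the whole tail block after $u$ runs at critical speed does not follow from Lemma~\ref{lem:suffixPrefix}, which only guarantees a prefix $(b+1,c)$ at critical speed, with possibly $c<j$.)

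The clean resolution --- and precisely the ``main obstacle'' you anticipated --- is to always take $[t,u)$ to be the \emph{first} maximal shutdown interval and then case on the two booleans $t=\min I$ and $u=\max I$. With that choice, the head before $r_a$ (or $r_k$) is automatically shutdown-free and costs a $Y$-value, while the tail after $d_c$ is a genuine subinstance costing an $O$-value; this asymmetry is exactly what the recursion encodes.
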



\begin{proof}
The case $j=i-1$ is simple, since the optimal empty schedule is either
idle or shutdown depending on the span of $[r_{j+1},d_{i-1})$.

Now  for  some  $i\leq  j$,  consider the  subinstance  $(i,  j)$.  By
induction on  $j-i$, we can  show that for  each of the four  cases in
(\ref{eq:defO}) there is a feasible schedule with the corresponding cost.  For the
remainder of the proof, we consider a schedule $S$ minimizing
$c_\text{speed}+c_\text{mode}$ for this subinstance, and we show that one
of the four cases yields its cost.

If $S$ is never power down, then the contribution of $c_{\text{mode}}$ is
exactly  $g  (r_{j  + 1}  -  d_{i  -  1})$,  and the  contribution  of
$c_{\text{speed}}$ is minimal. So the first case applies.

Now suppose that there is some interval $[t, u)$ where the schedule is
power down,  $[t, u)$  is inclusion-wise  maximal and it is the first interval.
There are several cases now, depending on the conditions $t =  \min I$, $u = \max I$, where  $I$ is the interval
associated to the sub-instance $(i,j)$.

It cannot be that both conditions  are true, since this means that the
schedule is empty, which contradicts the case assumption $i\leq j$.

If  $t = \min  I$ and  $u< \max I$,  then by  Lemma \ref{lem:suffixPrefix}
there is a prefix $(i, c)$ of the form $c = h (i)$. The portion up to
$d_c$ of this  schedule has a contribution to cost equal to $L + g^{\star}  (w_i + \ldots +
w_{c})$,   and  by  the  composition  of  schedules,   its   remainder
has a contribution of  $O_{c +1,j}$.   Hence, the second case  of
(\ref{eq:defO}) applies.

If $t > \min I$ and $u = \max I$, similarly there is a suffix $(k, j)$
and the cost of the schedule up  to $r_k$ is $Y_{i,k-1}$, since there are
no power down states, while the remainder contributes a cost of  $g^{\star} (w_k +
\ldots  +  w_j)   +  L$.   This  time  it  is   the  third  case  of
(\ref{eq:defO}) which applies.

If   $t   >   \min  I$   and   $u   <   \max   I$,  again   by   Lemma
\ref{lem:suffixPrefix}, there is a suffix $(a, b)$ and a prefix $(b_{}
+ 1, c)$ around a power down interval  $[t, u)$, and by Lemma \ref{lem:fh} we
have $b = f  (a), c = h (b + 1)$. Then, the   cost of the schedule decomposes
into the cost $Y_{i,  a-1}$ for the part  before $r_a$, since it  does not contain
power down states,  the cost $g^{\star} (w_a +  \ldots + w_c) +  L$ for the
part in $[r_a, d_c)$, and the cost  $O_{c+1, j}$ for the remainder, by the composition
of schedules. In  this final case, the last  case of (\ref{eq:defO})
applies.
\end{proof}

\section{Complexity analysis}

The dynamic program uses $O(n^2)$ variables, and for each one of them a minimization over
$O(n)$ values is required. Therefore, it can be run in $O (n^3)$ time.

For a fixed subinstance $(i, j)$ the functions $f, h$ can be computed by
simple scanning procedures in linear time as following (we omit their proof of correctness).

\begin{itemize}
  \item Initially $\ell := i$ and $t := r_i$. For all $k = i, i + 1,
  \ldots, j$, if $t < r_k$, then $\ell := k$, $t := r_k .$ In any
  case $f (\ell) := k$, $t := t + w_k / s^{\star}$.

  \item Initially $\ell := j$ and $t := d_j$. For all $k = j, j - 1,
  \ldots, i$, if $t > d_k$, then $\ell := k$, $t := d_k$. In any
  case $h (k) = \ell$, $t := t - w_k / s^{\star}$.
\end{itemize}

The computation of the values $Y_{i, j}$ however is crucial, there are $O
(n^2)$ of them and the best known algorithm to compute the optimal schedule
for each of  them runs in time $O  (n^2)$ \cite{li2006discrete}, which
would lead  to a total  running time of  $O (n^4)$. We now  describe a
procedure which  permits to compute $Y_{i,j}$ iteratively  from $Y_{i -
  1, j}$ in total time $O(n^2)$. Therefore we can compute all
optimal $c_\text{speed}$ subschedules in total time $O (n^3)$.

\subsection{Computing $Y_{i, j}$}

The general outline is as follows. We first compute
$Y_{1,   n}$    in   time   $O(n^2)$   using    the   algorithm   from
\cite{li2006discrete}. Then in  a first right to left  scan we compute
all values  $Y_{1, j}$ for $j =  n-1,\ldots, 1$.  After  that for every
$j$, we  apply a left to right  scan to compute all  values $Y_{i, j}$
for $i = 2, \ldots, j$. This left to right scan works as follows.

It receives as input the $c_\text{speed}$-optimal schedule $S$ for the
subinstance  $(1,  j)$,  and  applies the  following  \emph{squeezing}
procedure  to  $S$.   The  schedule  $S$ consists  of  a  sequence  of
\emph{blocks},  every block  spans  some time  interval  $[t, u)$  and
contains  a  sequence of  jobs  running  at  some constant,  but  block
dependent speed.

During the  procedure we keep track  of the first block which spans time interval
$[t,  u)$ and schedules the jobs $i,  \ldots,  b$ at  speed $s$.   Initially
$i=1$.  We consider the action  of squeezing the block to the interval
$[u   -    \ell,   u)$   by   increasing   the    speed   $s$,   where
$\ell:=u-(w_i+\ldots+w_b)/s$.

While $i\leq j$,  we decide which of the following events happens first, and
execute the corresponding actions.
\begin{description}
\item[unfeasibility event:]
  It  happens  when   $d_{i-1}=d_i$  or
  $r_j=r_{j-1}$.  Since  in  the  subinstance $(i,j)$  all  jobs  are
  restricted to the interval $[d_{i-1},r_{j+1}]$, it follows that one of
  the jobs  $i,j$ is  restricted to an  empty interval, and  cannot be
  scheduled  with  finite  speed.  In  this  case,  we  announce  that
  subinstance $(i,j)$ is  unfeasible, we remove job $i$  from $S$, and
  increase $i$.

\item[merge event:] It happens when the current speed
  $s$ equals $\text{speed} (u)$.  In this case we merge the first two blocks.
  (Note that  if $u=d_b$,  then this event  will immediately  be followed  by 
  the next split event for the merged block.)

\item[split event:] At  some moment, a job $i\leq  k<b$ from the first
  block might  complete at its  deadline. This happens when  the speed
  $s$ reaches $\hat  s(k,b,u) := (w_{k+1}+\ldots+w_b)/(u-d_k)$. In this
  case the  block splits into  two new blocks  with the first  of them
  restricted to  the interval $[t, d_k)$  and to the  jobs $i, \ldots,
  k$.

\item[deadline   event:]  When   $s=(w_i+\ldots+w_b)/(u-d_{i-1})$,  the
  current  schedule $S$ is  the optimal  $c_\text{speed}$-schedule for
  the  subinstance $(i,  j)$. In this case we output  $S$ as  $Y_{i,  j}$, we
  remove job $i$ from $S$, and increase $i$.
\end{description}

\begin{figure}[ht]
\begin{picture}(0,0)%
\includegraphics{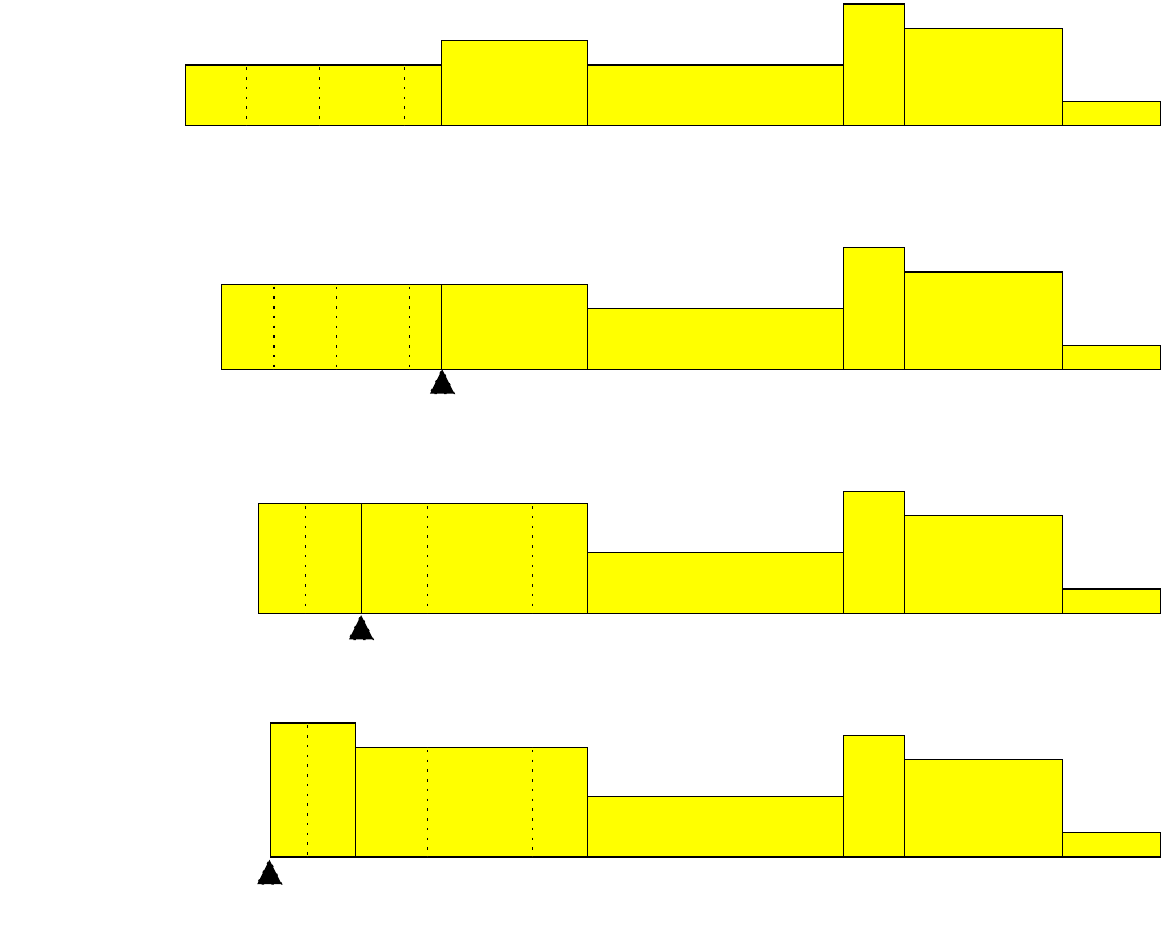}%
\end{picture}%
\setlength{\unitlength}{3079sp}%
\begingroup\makeatletter\ifx\SetFigFont\undefined%
\gdef\SetFigFont#1#2#3#4#5{%
  \reset@font\fontsize{#1}{#2pt}%
  \fontfamily{#3}\fontseries{#4}\fontshape{#5}%
  \selectfont}%
\fi\endgroup%
\begin{picture}(7152,5715)(1486,-7414)
\put(4181,-2646){\makebox(0,0)[lb]{\smash{{\SetFigFont{9}{10.8}{\rmdefault}{\mddefault}{\updefault}{\color[rgb]{0,0,0}$u$}%
}}}}
\put(3676,-5836){\makebox(0,0)[lb]{\smash{{\SetFigFont{9}{10.8}{\familydefault}{\mddefault}{\updefault}{\color[rgb]{0,0,0}$d_k$}%
}}}}
\put(4201,-4261){\makebox(0,0)[lb]{\smash{{\SetFigFont{9}{10.8}{\rmdefault}{\mddefault}{\updefault}{\color[rgb]{0,0,0}$u$}%
}}}}
\put(1501,-6961){\makebox(0,0)[lb]{\smash{{\SetFigFont{9}{10.8}{\familydefault}{\mddefault}{\updefault}{\color[rgb]{0,0,0}deadline event}%
}}}}
\put(1501,-5461){\makebox(0,0)[lb]{\smash{{\SetFigFont{9}{10.8}{\familydefault}{\mddefault}{\updefault}{\color[rgb]{0,0,0}split event}%
}}}}
\put(1501,-3961){\makebox(0,0)[lb]{\smash{{\SetFigFont{9}{10.8}{\familydefault}{\mddefault}{\updefault}{\color[rgb]{0,0,0}merge event}%
}}}}
\put(2701,-2386){\makebox(0,0)[lb]{\smash{{\SetFigFont{9}{10.8}{\familydefault}{\mddefault}{\updefault}{\color[rgb]{0,0,0}$i$}%
}}}}
\put(2626,-2686){\makebox(0,0)[lb]{\smash{{\SetFigFont{9}{10.8}{\rmdefault}{\mddefault}{\updefault}{\color[rgb]{0,0,0}$t$}%
}}}}
\put(3451,-5236){\makebox(0,0)[lb]{\smash{{\SetFigFont{9}{10.8}{\familydefault}{\mddefault}{\updefault}{\color[rgb]{0,0,0}$k$}%
}}}}
\put(3226,-6661){\makebox(0,0)[lb]{\smash{{\SetFigFont{9}{10.8}{\familydefault}{\mddefault}{\updefault}{\color[rgb]{0,0,0}$i$}%
}}}}
\put(3151,-5236){\makebox(0,0)[lb]{\smash{{\SetFigFont{9}{10.8}{\familydefault}{\mddefault}{\updefault}{\color[rgb]{0,0,0}$i$}%
}}}}
\put(2926,-3811){\makebox(0,0)[lb]{\smash{{\SetFigFont{9}{10.8}{\familydefault}{\mddefault}{\updefault}{\color[rgb]{0,0,0}$i$}%
}}}}
\put(3076,-2386){\makebox(0,0)[lb]{\smash{{\SetFigFont{9}{10.8}{\familydefault}{\mddefault}{\updefault}{\color[rgb]{0,0,0}$i+1$}%
}}}}
\put(1501,-2461){\makebox(0,0)[lb]{\smash{{\SetFigFont{9}{10.8}{\familydefault}{\mddefault}{\updefault}{\color[rgb]{0,0,0}squeeze}%
}}}}
\put(2851,-4261){\makebox(0,0)[lb]{\smash{{\SetFigFont{9}{10.8}{\rmdefault}{\mddefault}{\updefault}{\color[rgb]{0,0,0}$u-\ell$}%
}}}}
\put(4051,-2386){\makebox(0,0)[lb]{\smash{{\SetFigFont{9}{10.8}{\familydefault}{\mddefault}{\updefault}{\color[rgb]{0,0,0}$b$}%
}}}}
\put(3151,-7336){\makebox(0,0)[lb]{\smash{{\SetFigFont{9}{10.8}{\familydefault}{\mddefault}{\updefault}{\color[rgb]{0,0,0}$d_{i-1}$}%
}}}}
\end{picture}%
\caption{Different events during the squeeze procedure}
\label{fig:squeeze}
\end{figure}

At  any  moment  the  algorithm  maintains  a  schedule  $S$  for  the
subinstance consisting  of all  jobs $i,\ldots,j$ with  release times and
deadlines restricted to the  interval $[u-\ell, r_{j+1}]$. We omit the
proof of optimality of $S$ which should be straightforward.

It remains to specify how the next event can be determined in constant
time. The  merge and deadline events,  are both specified  by a single
expression determining the value $\ell$  at which they occur. For the
split  event the  situation  is  more subtle,  since  there are  $b-i$
candidates $\hat s(k,b,u)$, one for each job $i\leq k<b$.  We handle this
by precomputing  $\hat s$. Note  that for a  given job $b$,  there are
only $O(n)$ different times $u$ to be considered, and they are of the form $d_b,r_{b+1}$
and $r_{j+1}$ for all $1\leq j \leq n$.  This is because every block of an optimal
schedule ends either at the end of the interval $I$ if it
is the  last block, or at  one of $d_b,r_{b+1}$,  depending on whether
the next block has lower or higher speed.

This means that there are  $O(n^3)$ values of the form $\hat s(k,b,u)$
to compute, and  this can be done for each pair  $b,u$ in linear time,
by iterating $k$ from $b-1$ to $1$.  In the procedure above we need to
determine the job $k,~ i\leq k<b$ minimizing $\hat s(k,b,u)$.
Clearly, such a job $k$ can  be computed in constant
time for each triplet $(i,b,u)$,  again by iterating $i$ from $b-1$ to
$1$ for each pair $b,u$.

In  the event  loop described  above every  job  is responsible  for  at most three
events. Therefore its complexity is $O(n)$ for fixed $j$, which yields to a
total running time of  $O(n^3)$.

\section{Conclusion}

We provided  a polynomial  time algorithm for  the speed  scaling with
power down scheduling  problem,  for the  special  case of  agreeable
deadlines.  This assumption leads to strong structural properties of optimal schedules,
which are non-preempted  and, moreover, permit a partitioning
leading  to a dynamic programming algorithm.  So the  proposed algorithm  could not
be generalized to instances with arbitrary deadlines. However, we believe that the squeezing
procedure could be of independent interest.

\bibliographystyle{elsarticle-num}
\bibliography{4S}

\end{document}